\newtheorem{theorem}{Theorem}
\newtheorem*{theorem*}{Theorem}
\newtheorem{corollary}{Corollary}
\providecommand{\openone}{\leavevmode\hbox{\small1\kern-3.8pt\normalsize1}}
\begin{document}
\setlength{\arraycolsep}{2pt}

\title{Quantum steering of multimode Gaussian states by Gaussian measurements: monogamy relations and the Peres conjecture}
\author{Se-Wan Ji$^{1}$, M. S. Kim$^{2,3}$, and Hyunchul Nha$^{1,3}$}
\email{hyunchul.nha@qatar.tamu.edu}
\affiliation{$^1$Department of Physics, Texas A \& M University at Qatar, Doha, Qatar\\
$^2$Quantum Optics and Laser Science Group, Imperial College London, Blackett Laboratory, SW7 2AZ London, UK\\
$^3$School of Computational Sciences, Korea Institute for Advanced Study, Seoul 130-012, Korea\\}

\date{\today}

\begin{abstract}
It is a topic of fundamental and practical importance how a quantum correlated state can be reliably distributed through a noisy channel for quantum information processing. The concept of quantum steering recently defined in a rigorous manner is relevant to study it under certain circumstances and we here address quantum steerability of Gaussian states to this aim. In particular, we attempt to reformulate the criterion for Gaussian steering in terms of local and global purities and show that it is sufficient and necessary for the case of steering a 1-mode system by a $N$-mode system. It subsequently enables us to reinforce a strong monogamy relation under which only one party can steer a local system of 1-mode. Moreover, we show that only a negative partial-transpose state can manifest quantum steerability by Gaussian measurements in relation to the Peres conjecture.
We also discuss our formulation for the case of distributing a two-mode squeezed state via one-way quantum channels making dissipation and amplification effects, respectively. Finally, we extend our approach to include non-Gaussian measurements, more precisely, all orders of higher-order squeezing measurements, and find that this broad set of non-Gaussian measurements is not useful to demonstrate steering for Gaussian states beyond Gaussian measurements. 
\end{abstract}
\pacs{03.65.Ud, 03.67.Mn, 42.50.Dv}
\maketitle

\providecommand{\openone}{\leavevmode\hbox{\small1\kern-3.8pt\normalsize1}}

\bibliographystyle{apsrev}

\section{Introduction}
Quantum mechanics predicts strong correlations between systems, which may be used as a crucial resource for quantum information processing. The strongest form of correlation is the one that cannot be explained by any local realistic models \cite{Bell}, in which  
the joint probability of obtaining outcomes $a$ and $b$ under local measurements $A$ and $B$, respectively, can be represented by 
\begin{eqnarray}
P_{\rm LHV}(a,b:A,B)=\sum_\lambda p_\lambda P_\lambda(a:A)P_\lambda(b:B),
\end{eqnarray}
with a hidden-variable $\lambda$ and its distribution $p_\lambda$. The nonlocal correlation is important not only from a fundamental point of view but also for practical applications, e.g. unconditionally secure cryptography \cite{Acin} and random-number generation \cite{Pironio}.

During the past decades, there have been numerous theoretical and experimental efforts  demonstrating quantum nonlocality \cite{Brunner}. However, only recently, another form of nonlocal correlation was rigorously defined \cite{Wiseman}, which remarkably addresses the situation envisioned by Einstein, Podolsky, and Rosen (EPR) more closely \cite{EPR}. Namely, in the EPR steering, as put forward by Sch{\" o}dinger \cite{Sch}, Alice performs a local measurement on her system, which can steer Bob's system to a specific ensemble of states, if there exists quantum correlation shared between Alice and Bob. In terms of joint probability distribution, Wiseman {\it et al.} defined the steerability as ruling out the correlations explained by 
\begin{eqnarray}
P_{\rm LHS}(a,b:A,B)=\sum_\lambda p_\lambda P_\lambda(a:A)P_\lambda^Q(b:B),
\end{eqnarray}  
where the superscript $Q$ in $P_\lambda^Q(b:B)$ refers to the probabilities obeying quantum principles. Therefore, by its construction, $P_{\rm LHS}$ is a subset of $P_{\rm LHV}$, which makes it easier to demonstrate quantum steering than quantum nonlocality, even leading to a loophole-free experiment \cite{Smith}. It was recently demonstrated that the observability of EPR steering is intrinsically related to the incompatibility of quantum measurements \cite{Quintino}. In addition, quantum steering is asymmetric by its definition, that is, there can exist cases in which Alice can steer Bob's system but not vice versa \cite{Bowles,Handchen}. 

Quantum steerability is not only of fundamental interest but also of practical merit as it can provide a basis for secure communication. Even when Alice or her measurement devices cannot be trusted at all, Bob can still be convinced of true correlation by ruling out the models in Eq. (2). Indeed, some proposals for one-way device-independent cryptography were recently made based on the quantum steerability \cite{Branciard}. Quantum steering is also relevant to an important, practical, scenario of distributing an entangled state through a noisy channel. For instance, a trusted party Bob prepares an entangled state of $M\times N$ modes and sends the $M$-mode system to Alice through a noisy channel. He then wants to know whether a true correlation has been established between him and Alice despite a possible presence of malicious party that can be disguised in the form of noise. 

In this paper, we study quantum steering of mixed Gaussian states and specifically address the steering criterion in view of local and global purities, which can be formulated by the determinants of the covariance matrix. We show that the determinant condition is sufficient and necessary for the steering of 1-mode system by $N$-mode system through Gaussian measurements. This enables us to derive a strict monogamy relation under which only one party can steer a single-mode Gaussian state \cite{Ji}. However, we present a counter-example of steering a 2-mode system by a single-mode system for which the determinant condition fails to find the steering. Furthermore, we  prove the Peres conjecture \cite{Peres, Pusey} within the Gaussian regime that only a negative partial-transpose state can show quantum steerability by Gaussian measurements. We also discuss the case of distributing a two-mode Gaussian state through a  one-way Gaussian channel giving dissipation or amplification and demonstrating quantum steerability, which is a typical situation of practical relevance for continuous variables. Finally, we extend our approach to include non-Gaussian measurements addressing higher-order squeezing \cite{Hillery} and show that this broad set of non-Gaussian measurements cannot be more useful than Gaussian measurements, which would provide a partial clue to the steerablity of Gaussian states by non-Gaussian measurements.

\section{Preliminaries}
Let us consider a bipartite  Gaussian state $ \rho_{AB} $ of $\left(M + N \right)$-modes which is characterized by its covariance matrix $ \gamma_{AB}$. This covariance matrix can be written in a block form 
\begin{equation}
\label{eq9}
\gamma_{AB}=\left( \begin{array}{cc}
\gamma_{A} & C \\
C^{T} & \gamma_{B} \\
\end{array} \right),
\end{equation}
where $ \gamma_{A} $ and $\gamma_{B}$ are $2M \times 2M $ and $2N \times 2N $ real symmetric positive matrices, respectively. $C$ is a $ 2M \times 2N $ real matrix with its transposed matrix $C^{T}$. A covariance matrix $ \gamma_{AB} $ of a quantum state must satisfy the uncertainty relation as  \cite{Simon}
\begin{equation}
\label{eq10}
\gamma_{AB} + \left( i \Omega_{A} \oplus i \Omega_{B} \right) \geq 0,
\end{equation}  
where $ \Omega_{A\left(B\right)} =\oplus_{i=1}^{M \left(N \right)} \left( \begin{array}{cc}
0 & 1 \\
-1 & 0 \\
\end{array} \right) $. The uncertainty relation in Eq.(\ref{eq10}) implies that $ \gamma_{AB}>0 $ is strictly positive and its symplectic eigenvalues are larger than or equal to unity. 

\section{Gaussian steering: main results}

There are some questions related to quantum steerability, which can be potentially important for quantum communications. (i) One is the monogamy relation among different parties of multi-mode systems \cite{Toner,Brukner}. Specifically, is it possible for Eve to steer Bob's state simultaneously with Alice who steers Bob's system? (ii) There is an issue of one-way steering, namely, there exists a quantum state with which Alice can steer Bob's state while Bob cannot steer Alice's state. This is rather interesting because correlation is typically understood in a bi-directional context. How can we understand this phenomenon of one-way steering in a physical picture? (iii) Peres conjectured that the correlation of a PPT-state, which is non-negative under partial transposition, can always be explained by the LHV models in Eq. (1) \cite{Peres}. Can we shed some lights on this issue by investigating quantum steerability in the context of LHS model in Eq. (2)?

In this work we study the steerability of multi-mode Gaussian systems by Gaussian measurements \cite{Wiseman}. In particular, we aim at addressing steerability in terms of purity, local or global, as all subtle issues of entanglement arise due to mixedness of system induced by, e.g. system-reservoir interaction or inherent experimental noise. We derive the following results.

\subsection{Gaussian steering in view of local and global purities}
\begin{theorem}
If a bipartite Gaussian state $ \rho_{AB} $ of $ \left(M + N\right)$-modes with its covariance matrix $ \gamma_{AB} $ is non-steerable from $A$ to $B$ by Gaussian measurements, the mixedness of the global system $AB$ must be larger than that of the local system $A$, i.e.  
\begin{equation}
\label{eq11}
\det{\gamma_{AB}}\, \geq \det{\gamma_{A}}.
\end{equation}  
Therefore, if $\det{\gamma_{AB}}<\det{\gamma_{A}}$, steering is possible from $A$ to $B$ with Gaussian measurements.
\end{theorem}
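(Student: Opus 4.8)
The plan is to invoke the known operational characterization of Gaussian non-steerability (from the framework of Ref.~\cite{Wiseman}) and then reduce the claimed inequality to an elementary fact about bona fide covariance matrices. Recall that $\rho_{AB}$ is non-steerable from $A$ to $B$ by Gaussian measurements if and only if
\begin{equation}
\gamma_{AB} + i\left(\mathbf{0}_A \oplus \Omega_B\right) \geq 0 ,
\end{equation}
where $\mathbf{0}_A$ is the $2M\times 2M$ zero matrix. Since the uncertainty relation~(\ref{eq10}) forces $\gamma_A>0$, this block-matrix inequality is equivalent, upon taking the Schur complement with respect to the (real) block $\gamma_A$, to
\begin{equation}
\tilde\gamma_B + i\Omega_B \geq 0 , \qquad \tilde\gamma_B := \gamma_B - C^{T}\gamma_A^{-1}C .
\end{equation}

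First I would observe that the inequality $\tilde\gamma_B + i\Omega_B \geq 0$ says precisely that the real symmetric matrix $\tilde\gamma_B$ is itself a legitimate $N$-mode covariance matrix, i.e. the covariance matrix of Bob's \emph{conditional} state. Hence, by Williamson's theorem, $\tilde\gamma_B = S D S^{T}$ with $S$ symplectic ($\det S = 1$) and $D = \mathrm{diag}(\nu_1,\nu_1,\dots,\nu_N,\nu_N)$ whose symplectic eigenvalues obey $\nu_j\geq 1$; therefore $\det\tilde\gamma_B = \prod_{j}\nu_j^{2}\geq 1$. Next I would combine this with the standard Schur-complement determinant identity
\begin{equation}
\det\gamma_{AB} = \det\gamma_A\cdot\det\!\bigl(\gamma_B - C^{T}\gamma_A^{-1}C\bigr) = \det\gamma_A\cdot\det\tilde\gamma_B ,
\end{equation}
which immediately yields $\det\gamma_{AB} \geq \det\gamma_A$, i.e.~Eq.~(\ref{eq11}). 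The ``Therefore'' clause of the theorem is then just the contrapositive.

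The only genuinely delicate point is the first step: correctly setting up the Gaussian steering criterion and recognizing its Schur complement $\tilde\gamma_B$ as a bona fide reduced covariance matrix satisfying the single-sided uncertainty relation with $\Omega_B$ only. Once that conditional-state picture is in place, everything else is routine linear algebra (invertibility of $\gamma_A$ is guaranteed by strict positivity from~(\ref{eq10}), and the determinant and Williamson facts are standard). I would also remark that equality $\det\gamma_{AB}=\det\gamma_A$ occurs exactly when all $\nu_j=1$, i.e.~when $\tilde\gamma_B$ is the covariance matrix of a pure Gaussian state, which is the borderline case separating steerable from non-steerable configurations and which will be relevant when we later argue that the determinant condition is also necessary for $N\to 1$ steering.
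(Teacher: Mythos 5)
Your proposal is correct and follows essentially the same route as the paper's proof: the Gaussian non-steerability condition $\gamma_{AB}+\mathbf{0}_A\oplus i\Omega_B\geq 0$ is reduced via the Schur complement of $\gamma_A$ to $\gamma_B-C^{T}\gamma_A^{-1}C+i\Omega_B\geq 0$, the Schur complement is recognized as a bona fide covariance matrix with symplectic eigenvalues at least unity (hence determinant at least one), and the determinant factorization $\det\gamma_{AB}=\det\gamma_A\,\det\bigl(\gamma_B-C^{T}\gamma_A^{-1}C\bigr)$ gives Eq.~(\ref{eq11}). The only cosmetic difference is that you phrase the eigenvalue bound through Williamson's theorem while the paper cites the uncertainty-relation characterization directly, and the paper additionally notes $\gamma_{AB}>0$ to guarantee strict positivity of the Schur complement.
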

\begin{proof}
It is known that a covariance matrix $ \gamma_{AB}$ of a given Gaussian state $ \rho_{AB} $ is non-steerable from $A$ to $B$ with Gaussian measurements iff the following inequality is satisfied \cite{Wiseman},
\begin{equation}
\label{eq12}
{\gamma _{AB}} + {{\bf 0}_A} \oplus i{\Omega _B} \ge 0,
\end{equation}
where $ {\bf 0}_{A} $ is a $ 2M \times 2M $ null matrix. Using an argument based on the Schur complement of $\gamma_{A}$ together with the condition $ \gamma_{A} >0 $ \cite{Horn}, the inequality (\ref{eq12}) is equivalent to
\begin{equation}
\label{eq13}
\left(\gamma_{B} - C^{T} \gamma_{A}^{-1}C \right) + i \Omega_{B} \geq 0.  
\end{equation}
In addition, $ \gamma_{AB} >0$ leads to the positiveness of Schur complement of $\gamma_{A}$, i.e.,
\begin{equation}
\label{eq14}
\gamma_{B}-C^{T} \gamma_{A}^{-1} C >0.
\end{equation}
Eqs. (\ref{eq13}) and (\ref{eq14}) imply that $ \left(\gamma_{B}-C^{T} \gamma_{A}^{-1} C \right) $ can be regarded as a covariance matrix of a quantum state. As a result, all symplectic eigenvalues of $ \left(\gamma_{B}-C^{T} \gamma_{A}^{-1} C \right) $ are larger than or equal to $1$ \cite{Simon}, which implies 
\begin{equation}
\label{eq15}
\det{\left( \gamma_{AB}\right)}=\det{\left( \gamma_{A} \right)} \det{\left( \gamma_{B}-C^{T} \gamma_{A}^{-1}C \right)} \geq \det{\gamma_{A}}.
\end{equation}
\end{proof}

\begin{corollary}
\label{cor1}
For a bipartite Gaussian state of $\left(N + 1 \right) $-modes, the inequality (\ref{eq11}) is sufficient and necessary for non-steerability of Bob's 1-mode system by Alice with Gaussian measurements. 
\end{corollary}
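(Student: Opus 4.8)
The plan is to build directly on what was already established inside the proof of the Theorem. There it was shown that non-steerability from $A$ to $B$ by Gaussian measurements is equivalent to Eq.~(\ref{eq13}), i.e.\ to $\bigl(\gamma_B-C^{T}\gamma_A^{-1}C\bigr)+i\Omega_B\ge0$, and — via Eq.~(\ref{eq14}) — that the Schur complement $\tilde\gamma_B\equiv\gamma_B-C^{T}\gamma_A^{-1}C$ is always strictly positive. The only new ingredient needed for the corollary is to specialize to Bob holding a single mode, so that $\tilde\gamma_B$ is a $2\times2$ real symmetric positive matrix and $\Omega_B=\left(\begin{smallmatrix}0&1\\-1&0\end{smallmatrix}\right)$, and to exploit the fact that a one-mode physicality/steering condition is controlled entirely by a determinant.

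First I would record the elementary linear-algebra fact that for any $2\times2$ real symmetric matrix $V$ with $\mathrm{tr}\,V>0$, the Hermitian matrix $V+i\Omega_B$ is positive semidefinite if and only if $\det\!\left(V+i\Omega_B\right)\ge0$, and compute $\det\!\left(V+i\Omega_B\right)=\det V-1$ — the off-diagonal entries $c\pm i$ contribute $(c+i)(c-i)=c^2+1$, shifting the determinant by exactly unity. Since $\tilde\gamma_B>0$ implies $\mathrm{tr}\,\tilde\gamma_B>0$, the non-steerability condition Eq.~(\ref{eq13}) therefore collapses to the single scalar inequality $\det\tilde\gamma_B\ge1$. (Equivalently, one may invoke that the unique symplectic eigenvalue of a one-mode covariance matrix $\tilde\gamma_B$ equals $\sqrt{\det\tilde\gamma_B}$, so that $\tilde\gamma_B+i\Omega_B\ge0\iff\det\tilde\gamma_B\ge1$.)

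Then I would close the argument using the determinant factorization already recorded in Eq.~(\ref{eq15}), namely $\det\gamma_{AB}=\det\gamma_A\,\det\tilde\gamma_B$, together with $\det\gamma_A>0$: dividing through gives the chain of equivalences $\det\gamma_{AB}\ge\det\gamma_A\iff\det\tilde\gamma_B\ge1\iff\tilde\gamma_B+i\Omega_B\ge0\iff$ non-steerability from $A$ to $B$. The forward direction (non-steerable $\Rightarrow$ inequality) is just the Theorem; the content of the corollary is the converse, which the one-mode determinant criterion supplies for free.

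The only genuine obstacle — and precisely the reason the corollary is restricted to Bob's system being a single mode — lies in the second step: for $\tilde\gamma_B$ of size $2N\times2N$ with $N\ge2$, the determinant inequality $\det\tilde\gamma_B\ge1$ no longer forces every symplectic eigenvalue of $\tilde\gamma_B$ to be $\ge1$, since one eigenvalue below unity can be compensated by another above it. Hence $\tilde\gamma_B+i\Omega_B\ge0$ can fail even while $\det\gamma_{AB}\ge\det\gamma_A$, which is exactly what permits the counterexample of steering a $2$-mode system by a single mode discussed later. I would state this limitation explicitly, so the reader sees that the equivalence in the corollary is intrinsically a one-mode phenomenon rather than an artifact of the proof technique.
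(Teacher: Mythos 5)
Your proposal is correct and follows essentially the same route as the paper: both reduce non-steerability to the condition $\bigl(\gamma_B - C^{T}\gamma_A^{-1}C\bigr)+i\Omega_B\ge 0$ and exploit that for a single-mode Bob this $2\times2$ Schur complement has only one symplectic eigenvalue, so $\det\bigl(\gamma_B - C^{T}\gamma_A^{-1}C\bigr)\ge 1$ is equivalent to physicality, which combined with $\det\gamma_{AB}=\det\gamma_A\det\bigl(\gamma_B - C^{T}\gamma_A^{-1}C\bigr)$ gives the equivalence. Your explicit $2\times2$ determinant computation and the remark on why the argument breaks for a multimode Bob are just more detailed renderings of the same idea.
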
 
\begin{proof}
As we see from Eqs. (\ref{eq13}) and (\ref{eq14}), if a given $\left(N + 1 \right)$-modes bipartite Gaussian state is non-steerable from $A$ to $B$ with Gaussian measurements, then $ \left( \gamma_{B} - C^{T} \gamma_{A}^{-1} C \right) $ is a covariance matrix of a physical state and this means all symplectic eigenvalues are larger than or equal to unity. For $\left(N + 1 \right)$-modes, the matrix $ \left( \gamma_{B} - C^{T} \gamma_{A}^{-1} C \right) $ is simply a $ 2 \times 2 $ matrix, so the total number of symplectic eigenvalues is just one. It means that $ \det{\left( \gamma_{B}-C^{T} \gamma_{A}^{-1}C \right)} \geq 1 $ is equivalent to its sympletic eigenvalue larger than or equal to 1, thus satisfying the inequality (\ref{eq13}). Therefore, Eq.(\ref{eq15}) is necessary and sufficient for non-steerability of a $N \times 1 $ Gaussian state $\rho_{AB}$ from $A$ to $B$ with Gaussian measurements.
\end{proof}
\subsection{Monogamy relations}
Based on the above result, we also obtain a strict monogamy relation, which can be a crucial basis for secure quantum communication, e.g., in one-way device-independent cryptography \cite{Branciard}, as follows.
\begin{theorem}
For a tripartite Gaussian state $\rho_{ABE} $ of $\left( M + 1 + N \right)$-modes  with its covariance matrix $ \gamma_{ABE} $, if Alice can steer Bob's state, Eve cannot steer Bob's system simultaneously, which may provide a basis for secure communication using Gaussian steerability. 
\end{theorem}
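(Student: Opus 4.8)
The plan is to exploit Corollary~\ref{cor1}, whose hypothesis is met here because Bob holds a single mode. I would write $\gamma_{ABE}$ in block form with $\gamma_A,\gamma_B,\gamma_E$ on the diagonal ($\gamma_B$ a $2\times2$ block) and off-diagonal blocks $C_{AB},C_{BE},C_{AE}$, and note that both $\gamma_{AB}$ and $\gamma_{BE}$ are legitimate reduced covariance matrices of the physical state $\gamma_{ABE}$, so Corollary~\ref{cor1} applies verbatim to each of the bipartitions $A|B$ and $E|B$: Alice steers Bob iff $\det\gamma_{AB}<\det\gamma_A$, and Eve steers Bob iff $\det\gamma_{BE}<\det\gamma_E$. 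In terms of the conditional (i.e.\ Schur-complement) covariance matrices $\beta_A:=\gamma_B-C_{AB}^T\gamma_A^{-1}C_{AB}$ and $\beta_E:=\gamma_B-C_{BE}\gamma_E^{-1}C_{BE}^T$, which are $2\times2$ and positive since $\gamma_{ABE}>0$, the determinant identity already used in Eq.~(\ref{eq15}) gives $\det\gamma_{AB}=\det\gamma_A\,\det\beta_A$ and $\det\gamma_{BE}=\det\gamma_E\,\det\beta_E$; hence Alice (resp.\ Eve) steers Bob iff $\det\beta_A<1$ (resp.\ $\det\beta_E<1$), equivalently $\beta_A+i\Omega_B\not\ge0$ (resp.\ $\beta_E+i\Omega_B\not\ge0$). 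Consequently the whole theorem reduces to establishing, for every physical $\gamma_{ABE}$, the single inequality
\begin{equation}
\det\gamma_{AB}\,\det\gamma_{BE}\ \ge\ \det\gamma_A\,\det\gamma_E ,\qquad\text{equivalently}\qquad \det\beta_A\,\det\beta_E\ \ge\ 1 ,
\end{equation}
which at once rules out $\det\beta_A<1$ and $\det\beta_E<1$ holding together.

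To prove this inequality I would bring in the uncertainty relation $\gamma_{ABE}+i\Omega_A\oplus i\Omega_B\oplus i\Omega_E\ge0$, which is indispensable: the inequality fails for a generic (merely positive, ``classical'') covariance matrix correlating $B$ strongly with both $A$ and $E$, so it precisely encodes the monogamy. A first, routine step is to compare $\beta_A$ and $\beta_E$ with the covariance matrix $\beta_{AE}:=\gamma_B-D\,\gamma_{AE}^{-1}D^T$ of Bob conditioned on the combined party $AE$, where $D=(\,C_{AB}^T\ \ C_{BE}\,)$ and $\gamma_{AE}$ is the joint $A,E$ block (with $C_{AE}$ included): using $v^TG^{-1}v=\max_w(2w^Tv-w^TGw)$ and restricting the maximisation to vectors supported on the $A$-, respectively $E$-, subspace gives $D\gamma_{AE}^{-1}D^T\ge C_{AB}^T\gamma_A^{-1}C_{AB}$ and $\ge C_{BE}\gamma_E^{-1}C_{BE}^T$, hence $\beta_{AE}\le\beta_A$ and $\beta_{AE}\le\beta_E$ (nested Schur complements of $\gamma_{ABE}$ reproduce $\beta_{AE}$ and show it is positive), so $\det\beta_{AE}\le\min(\det\beta_A,\det\beta_E)$. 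This shows only that $A$ and $E$ \emph{jointly} steer Bob whenever both steer separately, which is not yet a contradiction; the hard part will be to turn ``jointly'' into ``not both separately'' by exploiting $\gamma_{ABE}+i\Omega\ge0$ in an essential way. I would do this by a nested Schur-complement computation against the full uncertainty relation (in the one-mode case this ultimately reduces, after diagonalising $\gamma_B$ by a local symplectic, to an estimate of the form $x+1/x\ge2$), or equivalently by invoking the strong-subadditivity/weak-monotonicity properties known for the Gaussian log-determinant (R\'enyi-$2$) entropy.

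I expect this last step to be the genuine obstacle: the reduction through Corollary~\ref{cor1} and the Schur-complement comparisons are mechanical, but on their own they leave open the physically legitimate scenario of two parties steering a single mode together, and separating that from the monogamous situation is exactly where the quantum constraint must actually be used. A minor additional point is the boundary case in which $\gamma_A$, $\gamma_E$ or $\gamma_{AE}$ has a symplectic eigenvalue equal to one, so that the relevant inverses are only defined as limits; this is handled by a routine continuity argument, approximating $\gamma_{ABE}$ from the interior of the physical region.
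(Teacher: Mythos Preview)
Your reduction is sound: by Corollary~\ref{cor1} the monogamy statement is equivalent to the determinant inequality
\[
\det\gamma_{AB}\,\det\gamma_{BE}\ \ge\ \det\gamma_A\,\det\gamma_E,
\]
and you correctly flag that this is precisely weak monotonicity of the Gaussian R\'enyi-$2$ (log-determinant) entropy, a known but nontrivial fact. Your Schur-complement comparisons $\beta_{AE}\le\beta_A,\beta_E$ are also correct, and you are right that they do not by themselves close the argument. So the proposal is valid in outline, but it defers exactly the step where all the work lies.

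The paper sidesteps that step entirely with a purification trick you have missed. Rather than prove the inequality for a general mixed $\gamma_{ABE}$, one first observes that it suffices to treat \emph{pure} Gaussian states: any mixed $\rho_{A'BE'}$ is a reduction of a pure $|\psi\rangle_{ABE}$ with $A'\subset A$ and $E'\subset E$, and enlarging a steering party can only help (if $A'$ steers $B$ so does $A\supset A'$, and likewise for $E$). For a pure Gaussian state the symplectic spectra of complementary reductions coincide, hence $\det\gamma_{AB}=\det\gamma_E$ and $\det\gamma_A=\det\gamma_{BE}$. Substituting these identities into the two steering conditions $\det\gamma_{AB}<\det\gamma_A$ and $\det\gamma_{BE}<\det\gamma_E$ gives $\det\gamma_E<\det\gamma_{BE}$ and $\det\gamma_{BE}<\det\gamma_E$, an immediate contradiction. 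No uncertainty-relation gymnastics, nested Schur complements, or appeal to SSA are needed.

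What each approach buys: the paper's purification argument is three lines and self-contained, using only the elementary fact that complementary reductions of a pure Gaussian state share the same symplectic eigenvalues. Your route, if completed, would prove the stronger statement $\det\beta_A\det\beta_E\ge1$ directly for mixed states, which is of independent interest (it is the R\'enyi-$2$ analogue of weak monotonicity) but amounts to re-deriving a result whose standard proof itself goes through purification. In short, you have set yourself a harder problem than necessary; the missing idea is to pass to a purification first and exploit $\det\gamma_X=\det\gamma_{X^c}$.
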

\begin{proof}
Here we can be restricted to a pure-state of $\left( M + 1 + N \right)$-modes. This is because an arbitrary mixed state of $ \left(M' + 1 + N' \right) $-modes  can be obtained by a partial trace over a pure state. If the part $A$ of $M$-modes cannot steer $B$, nor does $A'$ of $M'$-modes ($M'\le M$) that is obtained by a partial trace. Let us assume that the steering is possible from $A$ to $B$ and from $E$ to $B$ simultaneously. Then by {\bf Corollary 1}, we obtain following inequalities,
\begin{equation}
\label{eq16}
\det{\left(\gamma_{AB} \right)} < \det{\left( \gamma_{A} \right)},\;\; \det{\left( \gamma_{BE} \right)} < \det{\left( \gamma_{E} \right)},
\end{equation}
where $ \gamma_{AB}$, $\gamma_{BE}$, $\gamma_{A}$, and $ \gamma_{E} $ are the covariance matrices of relevant partial states. Because $ \rho_{ABE}$ is a pure state, we have $ \det{\left( \gamma_{AB} \right)} = \det{\left( \gamma_{E} \right)}$ and $\det{\left( \gamma_{A} \right)} = \det{\left( \gamma_{BE} \right)}$. So the two inequalities in Eq. (\ref{eq16}) cannot be satisfied simultaneously. Therefore, a simultaneous steering from both $A$ and $E$ to $B$ is impossible. 
\end{proof}
{\it Remark}. We here note that Reid also presented similar monogamy relations for Gaussian steering recently \cite{Reid, Ficek}. In \cite{Reid}, the situations considered are such that the observables at Bob's station are the same for both pairs of steering test, \{Alice, Bob\} and \{Charlie, Bob\}, whereas Alice and Charlie can take arbitrary observables. On the other hand, there can be cases where Bob's observables correlated with other subsystems vary from one party to another. This is because there does not generally exist a standard form for Gaussian states involving more than two modes. 
An example of 3-mode Gaussian state can be constructed to manifest correlation of $\{X_A^0,X_B^0\}$ and $\{X_A^{\pi/2},X_B^{\pi/2}\}$ between Alice and Bob, and of $\{X_C^{\pi/4},X_B^{\pi/4}\}$ and $\{X_C^{-\pi/4},X_B^{-\pi/4}\}$ between Charlie and Bob, respectively, where $X^\theta\equiv ae^{-i\theta}+a^\dag e^{i\theta}$ is the quadrature amplitude. Let $|\Psi\rangle$ be a two-mode squeezed state with squeezing parameter $r$ and $|0\rangle$ a single-mode vacuum state. Then, consider a 3-mode state given by $\rho=\frac{1}{2}|\Psi\rangle\langle\Psi|_{AB}\otimes |0\rangle\langle0|_C+\frac{1}{2}|0\rangle\langle0|_A\otimes  U_2U_3|\Psi\rangle\langle\Psi|_{BC} U_2^\dag U_3^\dag$, where $U_2$ and $U_3$ are unitary operators yielding a local phase shift of $\frac{\pi}{4}$ on modes 2 and 3, respectively. This state manifests entanglement for two different pairs $\{A,B\}$ and $\{B,C\}$ under the condition $1<\cosh r<3$. Remarkably, the correlated observables are $\{X_A^0,X_B^0\}$ and $\{X_A^{\pi/2},X_B^{\pi/2}\}$ for the pair $\{A,B\}$, and $\{X_C^{\pi/4},X_B^{\pi/4}\}$ and $\{X_C^{-\pi/4},X_B^{-\pi/4}\}$ for the pair $\{C,B\}$, respectively, where $X^\theta\equiv ae^{-i\theta}+a^\dag e^{i\theta}$ is the quadrature amplitude of each mode. The Gaussian version of this non-Gaussian state $\rho$ is readily constructed by only considering the covariance matrix of $\rho$.
On the other hand, our result above proves the strict monogamy relation without any restrictions on the Bob's observables.

{\it Counter-example}: We also remark that the determinant condition in Eq. (5) is only  sufficient for non-steerability of the Gaussian states if the steered subsystem has more than a single-mode. There exist some Gaussian states that satisfy the inequality in Eq. (5), but for which it is possible to steer from $A$ to $B$ with Gaussian measurements.  Let us consider a three mode Gaussian state which has a covariance matrix,
\begin{equation}
\gamma_{A\left(B_1B_2\right)}=\left( \begin{array}{cccccc}
2 & 0 & 1.88 & 0 & 0.37 & 0 \\
0 & 2 & 0 & -0.39 & 0 & -0.71 \\
1.88 & 0 & 2.78 & 0 & 0 & 0 \\
0 & -0.39 & 0 & 2.78 & 0 & 0\\
0.37 & 0 & 0 & 0 & 1.14 & 0 \\
0 & -0.71 & 0 & 0 & 0 & 1.14 \\
\end{array} \right),
\end{equation}
where the system $B$ is composed of two modes $B_1B_2$. One finds that this state satisfies the inequality in Eq. (5), i.e., $ \det \left(\gamma_{A\left(B_1B_2\right)} \right) \approx 9.187 > \det \left(\gamma_A \right) =4 $, but it can be possible to steer from $A$ to $B$, since $\det \left( \gamma_{A\left(B_1B_2\right)} + \left( {\bf 0}_A \oplus i \Omega_{B_1B_2} \right) \right) \approx -1.958 < 0$. 

We may obtain another form of monogamy relation bearing on permutation symmetry as follows.
\begin{theorem}
Let us consider a tripartite Gaussian state $ \rho _{ABC}$ of $ \left(L + M + N \right) $ modes. If the state is invariant under exchanging the parties $A$ and $B$ ($L=M$), it is impossible to steer from $A$ to $C$ and from $B$ to $C$ simultaneously by Gaussian measurements. 
\end{theorem}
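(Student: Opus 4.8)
\emph{Proof proposal.} The plan is to prove the stronger claim that swap symmetry alone already forbids steering from $A$ to $C$ (the simultaneous case is then immediate). The starting point: since $\rho_{ABC}$ is invariant under exchanging $A$ and $B$, its covariance matrix satisfies $\gamma_A=\gamma_B$, $C_{AC}=C_{BC}$, and $C_{AB}=C_{AB}^T$, where I write $\gamma_{ABC}$ in blocks with cross-correlation matrices $C_{AB},C_{AC},C_{BC}$. In view of Eq.~(\ref{eq13}), the non-steerability conditions for $A\to C$ and for $B\to C$ then coincide — both are $(\gamma_C-C_{AC}^T\gamma_A^{-1}C_{AC})+i\Omega_C\ge 0$, with $\Omega_A,\Omega_C$ the symplectic forms on the $L$-mode and $N$-mode subsystems — so it suffices to establish this single inequality.

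Next I would pass to the normal modes $A_\pm=(A\pm B)/\sqrt2$. Since $L=M$, this is a symplectic (indeed passive) transformation, so $A_+$ and $A_-$ each still carry $\Omega_A$. Using the three symmetry relations above, one checks that the covariance matrix becomes block diagonal in the split $\{A_-\}\oplus\{A_+,C\}$: the antisymmetric mode decouples, with $\gamma_{A_-}=\gamma_A-C_{AB}$, while $\gamma_{A_+}=\gamma_A+C_{AB}$ and $C_{A_+C}=\sqrt2\,C_{AC}$. In particular $\gamma_A=\tfrac12(\gamma_{A_+}+\gamma_{A_-})$ and $C_{AC}=\tfrac1{\sqrt2}C_{A_+C}$.

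The two ingredients I would then combine are: (i) $\gamma_{A_-}$ is a legitimate covariance matrix, so $\gamma_{A_-}+i\Omega_A\ge 0$; complex-conjugating this (which preserves positivity of a Hermitian matrix) gives $\gamma_{A_-}-i\Omega_A\ge 0$, i.e. $\gamma_{A_-}\ge i\Omega_A$, hence $\gamma_A\ge\tfrac12(\gamma_{A_+}+i\Omega_A)$, and inverting (both sides positive definite) $\gamma_A^{-1}\le 2(\gamma_{A_+}+i\Omega_A)^{-1}$. (ii) $\rho_{A_+C}$ is a legitimate state, so $\gamma_{A_+C}+i(\Omega_A\oplus\Omega_C)\ge 0$, whose Schur complement with respect to the $(\gamma_{A_+}+i\Omega_A)$ block — exactly the manoeuvre used in the proof of Theorem~1 — gives $\gamma_C+i\Omega_C\ge C_{A_+C}^T(\gamma_{A_+}+i\Omega_A)^{-1}C_{A_+C}$. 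Congruence by $C_{AC}$ turns (i) into $C_{AC}^T\gamma_A^{-1}C_{AC}\le 2\,C_{AC}^T(\gamma_{A_+}+i\Omega_A)^{-1}C_{AC}=C_{A_+C}^T(\gamma_{A_+}+i\Omega_A)^{-1}C_{A_+C}$, and chaining with (ii) yields $(\gamma_C-C_{AC}^T\gamma_A^{-1}C_{AC})+i\Omega_C\ge 0$, the desired non-steerability condition.

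The one delicate point — and the step I expect to be the main (minor) obstacle — is that both the Schur complement in (ii) and the inversion in (i) require $\gamma_{A_+}+i\Omega_A$ to be \emph{strictly} positive, which fails exactly when $\rho_{A_+}$ has a symplectic eigenvalue equal to $1$. I would handle this by the standard regularization $\gamma_{A_+}\to\gamma_{A_+}+\epsilon\openone$, $\gamma_{A_-}\to\gamma_{A_-}+\epsilon\openone$: both perturbed matrices remain valid covariance matrices, $\gamma_{A_+C}$ stays physical, the argument runs with strict inequalities, and letting $\epsilon\to 0^+$ (using $\gamma_A>0$) recovers the claim by continuity. Finally, swap symmetry gives the identical conclusion with $B$ in place of $A$, so neither party, and a fortiori not both simultaneously, can steer $C$. $\qquad\blacksquare$
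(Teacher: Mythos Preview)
Your proof is correct, and it follows a genuinely different route from the paper's.

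The paper works directly with the full tripartite uncertainty relation $\gamma_{ABC}+i\Omega_{ABC}\ge 0$: it Schur-complements out the $C$ block to obtain a $4M\times 4M$ positivity condition on the $AB$ sector, then evaluates that condition on the diagonal vector $\vec J=(\vec a_0,\vec a_0)$, where $\vec a_0$ is a witness for the putative steering $A\to C$. An auxiliary inequality obtained from the $AB$ uncertainty relation via a sign flip of the $C_1$ off-diagonal block finishes the contradiction. Your argument instead exploits the symmetry structurally: the 50--50 beamsplitter rotation $A_\pm=(A\pm B)/\sqrt2$ block-diagonalises the covariance matrix so that $A_-$ decouples entirely from $C$, and then two separate physicality constraints ($\gamma_{A_-}\pm i\Omega_A\ge 0$ and $\gamma_{A_+C}+i\Omega\ge 0$) are chained through the operator-monotonicity of inversion to produce the non-steering Schur-complement inequality for $A\to C$ directly.

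What each buys: the paper's route is a one-shot computation that avoids the inversion step and its attendant regularisation, at the cost of the somewhat ad hoc choice of test vector and sign-flip trick. Your decoupling makes the mechanism transparent---the symmetry literally peels off a factor that contributes to $\gamma_A$ but not to the $A$--$C$ correlations---and would adapt more readily to other permutation symmetries. It is also worth noting that both arguments in fact establish the stronger conclusion you stated at the outset: under the swap symmetry, $A$ alone cannot steer $C$ (the paper's proof never invokes the hypothesis that $B$ steers $C$, despite the phrasing of its final sentence). Your explicit framing of this is a mild improvement in clarity.
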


\begin{proof}
Let us consider a covariance matrix $ \gamma _{ABC} $ which has the form
\begin{equation}
\label{eq17}
\gamma _{ABC} = \left( \begin{array}{ccc}
\gamma_A & C_1 & C_2 \\
C_1 ^T & \gamma_A & C_2 \\
C_2 ^T & C_2 ^T & \gamma_C \\
\end{array} \right),
\end{equation}
\\
where $ \gamma_A $, $ \gamma_C $ are $ 2M \times 2M $, $ 2N \times 2N $ real  symmetric positive matrices, and $ C_1$, $C_2$ are $ 2M \times 2M $, $ 2M \times 2N$ matrices. A tripartite $ \left( M + M + N \right) $-modes Gaussian state $\rho _{ABC} $, which is invariant under exchanging the modes $A$ and $B$, has the covariance matrix in this form. The covariance matrix must satisfy the uncertainty relation,
\begin{equation}
\label{eq18}
\gamma _{ABC} + i \left( \Omega ^{(2M)} \oplus \Omega ^{(2M)} \oplus \Omega ^{(2N)} \right) \geq 0,
\end{equation}
where $ \Omega ^{(2k)} = \mathop  \oplus \limits_{i = 1}^{k} \left( \begin{array}{cc}
0 & 1 \\
-1 & 0 \\
\end{array} \right) $. With the local uncertainty relation $ \gamma_C +i \Omega ^{(2N)} \geq 0 $, the Schur complement of $\gamma_C +i \Omega ^{(2N)}$ should be positive semidefinite. This means 
\begin{widetext}
\begin{equation}
\label{eq19}
ABC / C =\left( \begin{array}{cc}
\gamma_A + i\Omega ^{(2M)} -C_2 \left( \gamma_C + i \Omega ^{(2N)} \right)^{-1}C_2^{T} & C_1 -C_2 \left( \gamma_C + i \Omega ^{(2N)} \right)^{-1}C_2^{T} \\ C_1^{T} -C_2 \left( \gamma_C + i \Omega ^{(2N)} \right)^{-1}C_2^T & \gamma_A + i\Omega ^{(2M)} -C_2 \left( \gamma_C + i \Omega ^{(2N)} \right)^{-1}C_2^{T}
\end{array} \right) \geq 0 ,
\end{equation}  
\end{widetext}
where $ ABC / C $ is the Schur complement of the matrix $ \gamma_C +i \Omega ^{(2N)} $ with respect to the total matrix $\gamma _{ABC} + i \left( \Omega ^{(2M)} \oplus \Omega ^{(2M)} \oplus \Omega ^{(2N)} \right)$.\\

If Alice can steer Charlie's state, we have
\begin{equation}
\label{eq20}
\gamma _{AC} + {\bf 0}_A \oplus i \Omega_C =\left( \begin{array}{cc}
\gamma_A & C_2 \\
C_2 & \gamma_C +i \Omega_C 
\end{array} \right) \not\ge 0.
\end{equation}  
The Schur complement of $\gamma_C + i \Omega_C\ge 0 $ with respect to the above matrix is then not positive semidefinite, i.e.
\begin{equation}
\label{eq21}
\gamma_A - C_2 \left( \gamma_C +i \Omega _C \right) C_2^{T} \not\ge 0.
\end{equation} 
Thus there exists at least one complex vector $ \vec{a}_0 \in \mathbb{C}^{2M}$ satisfying
\begin{equation}
\label{eq22}
\vec{a}_0 ^{\dagger} \left( \gamma_A - C_2 \left( \gamma_C +i \Omega _C \right) C_2^{T} \right) \vec{a}_0 <0.
\end{equation}
Now turning attention back to Eq.(\ref{eq19}), all complex vectors $ \vec{J} \in \mathbb{C}^{4M} $ should satisfy
\begin{equation} 
\label{eq23}
\vec{J}^{\dagger} \left( ABC/C \right) \vec{J} \geq 0.
\end{equation}
This must also be true for the choice of $ \vec{J} = \left( \begin{array}{c}
\vec{a}_0 \\
\vec{a}_0 \\
\end{array} \right)$, which gives 
\begin{eqnarray}
\label{eq24}
0 &&\leq\vec{J}^{\dagger} \left( ABC/C \right) \vec{J}\nonumber\\
&&\leq 4 \vec{a}_{0}^{\dagger} \left( \gamma_A -C_2 \left( \gamma_C + i \Omega ^{(2N)} \right)^{-1}C_2^{T} \right) \vec{a}_{0},\\
\end{eqnarray}
where the last line is obtained using an inequality $ \left( \begin{array}{cc}
\gamma_A & -C_1 \\
-C_1 ^{T} & \gamma_A
\end{array} \right) - i \Omega^{\left(2M \right)} \geq 0 $ equivalent to $ \left( \begin{array}{cc}
\gamma_A & C_1 \\
C_1 ^{T} & \gamma_A 
\end{array} \right) +i\Omega^{\left(2M \right)} \geq 0 $. However, if both $A$ and $B$ can steer $C$ simultaneously, there occurs a contradiction, i.e., the last term of Eq.(\ref{eq24}) should be negative from Eq.(\ref{eq22}). 
\end{proof}

\subsection{Peres conjecture}
Next we turn our attention the Peres conjecture \cite{Peres}. A stronger Peres conjecture on quantum steering was made \cite{Pusey}, which is recently disproved \cite{Moroder}, together with the original conjecture on nonlocality \cite{Vertesi}. However, we show that the Peres conjecture is valid in the Gaussian steering scenario.
\begin{theorem}
If a given bipartite Gaussian state $ \rho _{AB} $ of $\left(M + N \right)$-modes is steerable from $A$ to $B$ or from $B$ to $A$ with Gaussian measurements, then this state must have at least one negative eigenvalue under partial transposition, i.e., NPT.
\end{theorem}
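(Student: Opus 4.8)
The plan is to prove the contrapositive: a Gaussian state that is \emph{positive} under partial transposition (PPT) can be steered neither from $A$ to $B$ nor from $B$ to $A$ by Gaussian measurements. By the criterion already invoked in the proof of Theorem~1, non-steerability from $A$ to $B$ is precisely the semidefinite inequality $\gamma_{AB} + {\bf 0}_A \oplus i\Omega_B \ge 0$, and, by exchanging the roles of the two parties, non-steerability from $B$ to $A$ is $\gamma_{AB} + i\Omega_A \oplus {\bf 0}_B \ge 0$. So it is enough to show that the PPT property forces both of these.

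The key observation is that the genuine uncertainty relation, the PPT condition, and the two non-steering conditions all belong to the same family of inequalities of the form $\gamma_{AB} + i X_A \oplus i Y_B \ge 0$, and that this family is closed under convex combinations. Concretely, I would proceed as follows. (i) The uncertainty relation gives $\gamma_{AB} + i\Omega_A \oplus i\Omega_B \ge 0$. (ii) For a Gaussian state, being PPT is equivalent (Simon's criterion, \cite{Simon}) to the partially transposed covariance matrix $\Lambda_B \gamma_{AB} \Lambda_B$ satisfying the uncertainty relation, where $\Lambda_B={\bf 1}_A\oplus(\oplus_{i=1}^N\mathrm{diag}(1,-1))$ is the momentum flip on $B$, so that $\Lambda_B\Omega_B\Lambda_B=-\Omega_B$; conjugating that inequality by $\Lambda_B$ (which preserves positivity) turns it into $\gamma_{AB} + i\Omega_A \oplus (-i\Omega_B) \ge 0$. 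Averaging (i) and (ii) annihilates the $B$-block and yields $\gamma_{AB} + i\Omega_A \oplus {\bf 0}_B \ge 0$, i.e. non-steerability from $B$ to $A$. For the other direction I would either invoke that PPT is symmetric in $A\leftrightarrow B$ (so partial transposition on $A$ likewise gives $\gamma_{AB} + (-i\Omega_A)\oplus i\Omega_B \ge 0$), or simply transpose inequality (ii) as a matrix — legitimate since $\gamma_{AB}$ is real symmetric, $\Omega$ is real antisymmetric, and transposition preserves positivity of Hermitian matrices — which again gives $\gamma_{AB} + (-i\Omega_A)\oplus i\Omega_B \ge 0$. Averaging this with (i) produces $\gamma_{AB} + {\bf 0}_A \oplus i\Omega_B \ge 0$, i.e. non-steerability from $A$ to $B$. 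Taking the contrapositive of the combined statement finishes the proof.

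I expect the only point requiring care, rather than a genuine obstacle, to be step (ii): one must state explicitly that the partial transpose of a Gaussian state is again a Gaussian (positive) operator and hence is a legitimate state exactly when its covariance matrix obeys the uncertainty relation, so that PPT is faithfully captured by the matrix inequality. Everything downstream is the one-line trick of averaging the uncertainty relation with the partial-transpose inequality to kill one block, together with the bookkeeping needed to cover both steering directions, which the transpose (or the $A\leftrightarrow B$ symmetry of PPT) handles. No estimates, optimization, or case analysis enter, so the argument should be short and self-contained given the results already established.
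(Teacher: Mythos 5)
Your proof is correct, but it takes a genuinely different route from the paper's. The paper argues directly: steerability from $A$ to $B$ means the Schur complement $\gamma_A - C(\gamma_B+i\Omega_B)^{-1}C^T$ fails to be positive semidefinite, so there is a witness vector $\vec a_0$ with negative expectation; combining this with the Schur complement of the genuine uncertainty relation gives $0 < -\vec a_0^\dagger\bigl(\gamma_A - C(\gamma_B+i\Omega_B)^{-1}C^T\bigr)\vec a_0 \le \vec a_0^\dagger(i\Omega_A)\vec a_0$, and the same vector then violates the Schur-complement form $\gamma_A - i\Omega_A - C(\gamma_B+i\Omega_B)^{-1}C^T \ge 0$ of the PPT condition, hence the state is NPT. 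You instead prove the contrapositive by an averaging argument: the uncertainty relation $\gamma_{AB}+i\Omega_A\oplus i\Omega_B\ge 0$ and the Simon/Werner--Wolf PPT inequality, in its two equivalent forms $\gamma_{AB}+i\Omega_A\oplus(-i\Omega_B)\ge 0$ and $\gamma_{AB}+(-i\Omega_A)\oplus i\Omega_B\ge 0$ (related by matrix transposition, which indeed preserves positivity of Hermitian matrices), are convex-combined so as to annihilate one block, yielding both non-steering conditions $\gamma_{AB}+i\Omega_A\oplus{\bf 0}_B\ge0$ and $\gamma_{AB}+{\bf 0}_A\oplus i\Omega_B\ge0$ at once. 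Your route is shorter and more elementary: it needs no Schur complements or inverses (the paper's proof implicitly assumes $\gamma_B+i\Omega_B$ is invertible, which can fail, e.g., for pure states, and would strictly require generalized Schur complements), it handles both steering directions in one stroke, and its only external input is the standard fact, already cited in the paper, that positivity of the partial transpose of a Gaussian state is equivalent to $\Lambda_B\gamma_{AB}\Lambda_B$ obeying the uncertainty relation. What the paper's witness-vector argument buys is consistency with the Schur-complement machinery used throughout Theorems 1--3 and an explicit exhibition of the vector along which the PPT inequality is violated by the steering witness.
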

\begin{proof}
If the covariance matrix of a Gaussian state violates the condition Eq. (\ref{eq12}), i. e., $ {\gamma _{AB}} + {{\bf 0}_A} \oplus i{\Omega _B} \not\ge 0$, the steering from $A$ to $B$ is possible by Gaussian measurements. In this case, the Schur complement of $ \gamma_{B} + i{\Omega _B}\ge0$ is not positive semidefinite. In other words, there exists a complex vector $ {\vec a_0}$ satisfying 
\begin{equation}
\label{eq25}
\vec a_0^\dag \left( {\gamma_{A} - C{{\left[ { \gamma_{B} + i{\Omega _B}} \right]}^{ - 1}}{C^T}} \right){\vec a_0} < 0.
\end{equation}
Since $ \gamma_{AB} $ is also a covariance matrix, the following inequality must be satisfied for an arbitrary complex vector $ \vec{a}$,
\begin{equation}
\label{eq26}
\vec{a}^{\dagger} \left( {\gamma_{A} + i{\Omega _A} - C{{\left[ {\gamma_{B} + i{\Omega _B}} \right]}^{ - 1}}{C^T}} \right){\vec a} \ge 0
\end{equation}
Combining Eqs.(\ref{eq25}) and (\ref{eq26}), we see that the complex vector $ {\vec a_0}$ satisfies 
\begin{equation}
\label{eq27}
0 < \; - \vec a_0^\dag \left( {\gamma_{A} - C{{\left[ {\gamma_{B} + i{\Omega _B}} \right]}^{ - 1}}{C^T}} \right){\vec a_0}\;\; \leq \;\;\vec a_0^\dag \left( {i{\Omega _A}} \right){\vec a_0}.
\end{equation}
\\
If a bipartite quantum state is separable, then its density matrix has to be positive semidefinite under partial transposition \cite{Peres}. For a Gaussian state, this can be formulated using its covariance matrix \cite{Werner} as
\begin{equation}
\label{eq28}
{ \gamma _{AB}} + i\left( { - {\Omega _A} \oplus {\Omega _B}} \right) \ge 0.
\end{equation}
The violation of the inequality (\ref{eq28}) means that the density matrix of a given bipartite Gaussian state has at least one negative eigenvalue under partial transposition. This is called non-positive partial transposition (NPT) criterion for inseparability. Eq. (\ref{eq28}) can be reexpressed using Schur complement as
\begin{equation}
\label{eq29}
\gamma_{A} - i{\Omega _A} - C{\left( {\gamma_{B} + i{\Omega _B}} \right)^{ - 1}}{C^T} \ge 0
\end{equation}
Now, let us consider a bipartite Gaussian quantum state with which $A$ can steer $B$ by Gaussian measurements. Then there exists a complex vector satisfying Eqs. (\ref{eq25}) and (\ref{eq27}), which leads to
\begin{equation}
\label{eq30}
\vec a_0^\dag \left( {\gamma_{A} - i{\Omega _A} - C{{\left[ {\gamma_{B} + i{\Omega _B}} \right]}^{ - 1}}{C^T}} \right){\vec a_0}\; < \;\,0.
\end{equation}
Therefore, a bipartite Gaussian state which is steerable by Gaussian measurements violates Eq. (\ref{eq29}), which means NPT. 
\end{proof}

\section{one-channel decoherence}
Let us now consider a realistic problem of distributing an entangled state through a noisy channel. In particular, we consider the case that one party prepares an entangled state (two-mode squeezed state) and sends a subsystem to the other through a decoherence channel (dissipating or amplifying). 

Under these circumstances, we may also discuss one-way steering using our Theorem 1 and Corollary 1. If the steering is possible only from $A$ to $B$, not from $B$ to $A$, the local and global purities show the relation $\det{\gamma_{B}}\leq\det{\gamma_{AB}}<\det{\gamma_{A}}$. We illustrate the cases of one-way steering by a two-mode pure state undergoing one-channel decoherence via dissipation and amplification, respectively.\\

A covariance matrix $ \gamma_{AB}$ of a two-mode squeezed vacuum state $\rho_{AB} $ is given by
\begin{equation}
\gamma_{AB}=\left( \begin{array}{cccc}
\cosh{2r} & 0 & \sinh{2r} & 0 \\
0 & \cosh{2r} & 0 & -\sinh{2r} \\
\sinh{2r} & 0 & \cosh{2r} & 0 \\
0 & -\sinh{2r} & 0 & \cosh{2r}  \\
\end{array} \right),
\end{equation}
where $r$ is the squeezing parameter. For any $ r>0$, the given state is entangled and steerable in both ways. Let us assume that the mode $B$ goes through a lossy channel which is characterized by a parameter $ \eta $. That is, the output state will have a covariance $\gamma_{AB}\rightarrow\gamma_{AB'}$, 
\begin{widetext}
\begin{equation}
\gamma_{AB'}=\left( \begin{array}{cccc}
\cosh{2r} & 0 & \sqrt{\eta}\sinh{2r} & 0 \\
0 & \cosh{2r} & 0 & -\sqrt{\eta}\sinh{2r} \\
\sqrt{\eta}\sinh{2r} & 0 & \eta \cosh{2r}+1-\eta & 0 \\
0 & -\sqrt{\eta}\sinh{2r} & 0 & \eta\cosh{2r}+1-\eta \\
\end{array} \right),
\end{equation}
\end{widetext}
where $\eta$ ($ 0 \leq \eta \leq 1 $) characterizes the degree of dissipation, with $\eta\rightarrow0$ the case of decohering mode $B$ to a vacuum state. 

Now, let us look into the steerability from $A$ to $B$ or from $B$ to $A$ with Gaussian measurements. It is straightforward to find that the steering is always possible from $A$ to $B$ except $\eta=0$, while the steering is possible from $B$ to $A$ only in the case $ \frac{1}{2} < \eta \leq 1 $. 

It might seem to suggest that the subsystem undergoing a noisy channel becomes harder to steer the other party than the other way around \cite{Cwlee}. However, it is not quite true.  Now let us consider the case of mode $B$ undergoing an amplifying channel in comparison. Then the covariance matrix $\gamma_{AB}$ is changed to $\gamma_{AB''}$ as 
\begin{widetext}
\begin{equation}
\gamma_{AB''}=\left( \begin{array}{cccc}
\cosh{2r} & 0 & \sqrt{G}\sinh{2r} & 0 \\
0 & \cosh{2r} & 0 & -\sqrt{G}\sinh{2r} \\
\sqrt{G}\sinh{2r} & 0 & G\cosh{2r}+G-1 & 0 \\
0 & -\sqrt{G}\sinh{2r} & 0 & G\cosh{2r}+G-1 \\
\end{array} \right),
\end{equation}
\end{widetext}
where $G \geq 1 $ is the gain parameter of the amplifying channel. In this case, the steering is possible from $A$ to $B$ only when $ 1 \leq G < \frac{2\cosh{2r}}{\cosh{2r}+1}$ is satisfied. In other words, if the gain is larger or equal to $\frac{2\cosh{2r}}{\cosh{2r}+1} \left( < 2\right)$, it is impossible to steer from $A$ to $B$ by Gaussian measurements. On the other hand, it is always possible to steer from $B$ to $A$ (since we assume $r>0$).  

The above two examples of dissipating and amplification channels again suggest that the local mixedness compared with the global mixedness can be an important factor affecting the evolution of quantum correlation, precisely, the quantum steering. As a final remark, once the state under a noisy channel becomes unsteerable, say from Alice to Bob, it remains so even though the trusted party, Bob, performs a local pre-processing \cite{Quintinno}. For instance, under the amplification of Bob's mode above the gain $G=\frac{2\cosh{2r}}{\cosh{2r}+1}$, Alice can no longer steer Bob's local state by Gaussian measurements even if Bob attempts to attenuate his mode after the amplification, which only affects the steerability by Bob on Alice's mode in a negative manner. However, it is worthwhile to more thoroughly investigate the steerability under local filtering, which may project the output state to a non-Gaussian one (e.g. projection to a qubit space). It is still an open question whether the steeering by Gaussian measurements is both sufficient and necessary for the steerability of Gaussian states. Therefore, although an initial state is non-steerable by Gaussian measurements, it might be steerable by non-Gaussian measurements. In such a case, the non-Gaussian state by a local processing might be steerable by a set of non-Gaussian measurements, which would then demonstrate the incompleteness of Gaussian measurements for steering of Gaussian states. This issue is beyond the scope of current work and will be addressed elsewhere.

\section{Summary}

In this work, we have studied quantum steering of Gaussian states by Gaussian measurements, specifically by looking into local and global purities quantified by the determinants of covariance matrix. We have proved that the determinant condition Eq. (5) is necessary and sufficient for Gaussian steerability when a $N$-mode subsystem steers a single-mode system. On the other hand, we have also constructed a counter-example of steering a two-mode system by a single-mode system in which the determinant condition fails to describe the quantum steering completely. Moreover, we have shown that only a negative partial-transpose state can manifest quantum steerability by Gaussian measurements in relation to the Peres conjecture.

Nevertheless, the mixedness relation may be a crucial tool to address the entanglement dynamics in practically relevant situations. For instance, using the determinant condition, we have proved a strict monogamy relation under which only one party can steer a single-mode Gaussian state. As another practically important situation, we have considered the case of distributing a two-mode entangled state through one-way dissipating and amplifying channels, respectively, and showed how one-way steering may arise.  

Our work is here restricted to performing Gaussian measurements for steerability, which must be extended to non-Gaussian measurements in order to have a more complete understanding of quantum steering even for Gaussian states, e.g. identification of genuine one-way steering. As an extension, we have considered higher-order quadrature measurements in Appendix and shown that if Gaussian steering criterion fails to detect steerability of Gaussian state, nor does such high-order non-Gaussian measurements. We will incorporate a broader class of non-Gaussian measurements to address quantum steerability of Gaussian states in future.

\section{Acknolwedgements}
This work was supported by the NPRP grant 4-520-1-083 from Qatar National Research Fund.

{\it Note Added:} A recent article \cite{Adesso} obtained similar results reported in the present paper and \cite{Ji}, such as the role played by local vs. global purities and the Peres conjecture.

\section{Appendix}

\subsection{Reid's criterion}
Let us consider three observables $ \{ \hat{B}_1, \hat{B}_2, \hat{B}_3 \}$ obeying a commutation relation $ [ \hat B_1,\; \hat B_2 ] = i \hat B_3 $. It gives a Heisenberg uncertainty relation, 
\begin{equation}
\label{uncer}
\Delta^2 \left( \hat B_1 \right)_{\rho}  \Delta^2 \left( \hat B_2 \right)_{\rho} \geq \frac{1}{4} | \left< \hat B_3 \right>_{\rho} |^2,
\end{equation}
where $ \Delta^2\left( \hat B_i \right)_{\rho} $ and $ \left< \hat B_i \right>_{\rho} $ are the variance and the mean value, respectively, of the observable $ \hat B_i$ ($i=1,2,3$) for a quantum state $ \rho $. When a bipartite system can be described by the LHS model, we obtain a non-steerability condition as
\begin{equation}
\label{inf-uncer}
\Delta^2_{inf}\left( \hat B_1 \right) \Delta^2_{inf}\left( \hat B_2 \right)  \geq \frac{1}{4} \left|\left< \hat B_3 \right>\right|^2_{inf},
\end{equation}
where $ \Delta^2_{inf} \left( \hat B_i \right) $ is an {\it inferred} variance ($i=1,2$) \cite{reid1} and 
\begin{eqnarray}
\label{inf-avg}
 \left|\left< \hat B_3 \right>\right|_{inf} = \sum_{a_3} P\left( A_3 =a_3 \right) \left| \left< \hat B_3 \right>_{A_3=a_3} \right|,
\end{eqnarray}
which are all conditioned on the measurement outcomes for the subsystem $A$ \cite{cavalcanti}. 

The inferred variance of $ \hat B_i$ with its estimate $ B_{i,est} \left(A_i \right) $ based on the outcome $A_i$ is defined by \cite{reid1, cavalcanti}
\begin{equation}
\label{infer}
\Delta^2_{inf} \left( \hat B_i \right) = \left< \left[ \hat B_i - B_{i,est} \left( A _i \right) \right]^2 \right>. 
\end{equation}
With a choice of linear estimate $ B_{i,est} \left(A_i \right)=-g_i A_i+ \left< \hat B_i + g_i A_i \right> $  ($ g_i$: a real number) \cite{cavalcanti, reid2}, one can derive the optimal  inferred variance $\Delta^2_{inf}\left( \hat B_i \right)_{opt}$ \cite{reid2},
\begin{equation}
\label{op-infer}
\Delta^2_{inf}\left( \hat B_i \right)_{opt}= \Delta^2 \left( \hat B_i \right) -\frac{\left< A_i \hat B_i \right>-\left<A_i \right> \left< \hat B_i \right>}{\Delta^2 \left(A_i \right)}.
\end{equation}
If a given quantum state violates the inequality (\ref{inf-uncer}), the  state is steerable from Alice to Bob. 

\subsection{Two-mode Gaussian states}
A standard form of a covariance matrix for a two-mode state is given by,
\begin{equation}
\label{standard}
\gamma _{AB} ^{(s)}  = \left( \begin{array}{cc}
 \gamma_A & \tilde C \\ 
 \tilde C^T & \gamma_B \\ 
 \end{array} \right) = \left( \begin{array}{cccc}
 a & 0 & c_1 & 0 \\ 
 0 & a & 0 & c_2  \\ 
 c_1 & 0 & a & 0 \\ 
 0 & c_2 & 0 & a \\ 
 \end{array} \right).
\end{equation} 
Without loss of generality, we may set $ c_1 \geq \left| c _2 \right| \geq 0 $. For a Gaussian state $ \rho _{AB}$ with a covariance matrix in the standard form (\ref{standard}) and the means $ \left< \hat a \right>=\left< \hat b \right>=0 $, the normally-ordered characteristic function $C_N \left( \alpha _1, \alpha _2 \right)$, which facilitates the calculation of various moments of our interest, is given by \cite{mista}
\begin{widetext}
\begin{equation}
\label{characteristic}
C_N \left( \alpha _1, \alpha _2 \right)  = {\rm Tr} \left[ \rho _{AB} e^{\alpha _1 \hat a^{\dagger}+ \alpha _2 \hat b^{\dagger}}e^{-\alpha ^{*}_{1}\hat a-\alpha ^{*}_2 \hat b } \right] =\exp \left[ - \left(A \left| \alpha _1 \right|^2 + B \left| \alpha _2 \right|^2 \right) 
+ \left(C \alpha ^{*}_{1} \alpha ^{*}_{2} + D \alpha _1 \alpha ^{*}_2 + C \alpha _1 \alpha _2 + D \alpha ^{*}_1 \alpha _2 \right)  \right], 
\end{equation} 
\end{widetext}
where $ A = \left< \hat a^{\dagger} \hat a \right>=\frac{\left(a-1 \right)}{2} $, $B=\left< \hat b^{\dagger} \hat b \right>=\frac{\left(b-1 \right)}{2}$, $C=\left< \hat a \otimes \hat b \right>=\frac{\left( c_1-c_2 \right)}{4}$, and $D= -\left< \hat a^{\dagger} \otimes \hat b \right>=\frac{-\left(c_1+c_2 \right)}{4}$. If the Gaussian state is non-steerable from $A$ to $B$ by Gaussian measurements, we have the determinant condition
\begin{widetext}
\begin{equation}
\label{standard-steering}
\det \left( \gamma _{AB} \right) \geq \det\left( \gamma_A \right) \Leftrightarrow \left( \left[ 2A +1 \right]\left[ 2B+1 \right] - 4\left[C-D\right]^2\right)\left( \left[ 2A +1 \right]\left[ 2B+1 \right] -4\left[C+D \right]^2 \right) \geq \left(2A +1 \right) ^2.
\end{equation}
\end{widetext}
Using $\alpha^2+\beta^2\ge 2\alpha\beta$, where $\alpha$ and $\beta$ are the two terms of the product in the left-hand side of (\ref{standard-steering}), one can readily obtain two equivalent inequalities as
\begin{equation}
\label{arithmetic1}
a^2\left(b-1\right)^2 \geq \frac{1}{4}\left( c^2_1 + c^2_2 \right)^2 \Leftrightarrow AB \geq \left( C^2 + D^2 \right)-\frac{B}{2}.
\end{equation}

\subsection{non-Gaussian measurements}
We now consider two non-commuting observables in high orders, i.e. 
\begin{equation}
\label{conjugate1}
\begin{array}{cc}
\hat X^{\left(N\right)}_{A\left(B\right)} =\frac{1}{\sqrt{2}}\left( \hat a^N \left( \hat b^{N} \right)+ \hat a^{\dagger N} \left( \hat b^{\dagger N} \right) \right), \\
\hat P^{\left(N\right)}_{A\left(B\right)} =\frac{1}{\sqrt{2}i}\left( \hat a^N \left( \hat b^{N} \right)- \hat a^{\dagger N} \left( \hat b^{\dagger N} \right) \right).
\end{array}
\end{equation}
For $N=1$, the operators in Eq.(\ref{conjugate1}) are two regular quadrature amplitudes while $N>1$ defines higher-order quadratures. For example, Hillery considered the case of $N=2$ for the study of higher-order squeezing effect \cite{Hillery}. We show that if a two-mode Gaussian state with the covariance matrix in Eq.(\ref{standard}) and normal characteristic function in Eq.(\ref{characteristic}) is non-steerable by Gaussian measurements from Alice to Bob, i.e. satisfying the inequalities (\ref{standard-steering}) and (\ref{arithmetic1}), then the non-steerability criteria (\ref{inf-uncer}) are also satisfied with the operators in Eq.(\ref{conjugate1}) for all $N\geq 2 $. This would give a partial clue for the steerablity of Gaussian states with non-Gaussian measurements. 

Let us first consider the case of $N=2$. We obtain the non-steerability condition in Eq. (\ref{inf-uncer}), using Eqs. (\ref{inf-avg}) and (\ref{op-infer}), as 
\begin{equation}
\label{n2}
\left( b^2 +1 -\frac{\frac{1}{4}\left( c^2_1 +c^2_2 \right)^2}{a^2+1} \right)\left( b^2 +1 -\frac{c^2_1c^2_2}{a^2+1} \right) \geq 4b^2 .
\end{equation} 
If a Gaussian state satisfies the inequalities in Eqs. (\ref{standard-steering}) and (\ref{arithmetic1}), it also satisfies the inequality (\ref{n2}). 
\begin{proof}
\begin{equation}
\label{pn2}
\begin{array}{llll}
\left( b^2 +1 -\frac{\frac{1}{4}\left( c^2_1 +c^2_2 \right)^2}{a^2+1} \right)\left( b^2 +1 -\frac{c^2_1c^2_2}{a^2+1} \right) - 4b^2 \\
\geq \left( b^2+1 -\frac{\frac{1}{4}\left( c^2_1 +c^2_2 \right)^2}{a^2+1} \right)^2-4b^2 \\
\begin{array}{ll}= \frac{1}{a^2+1}\left[ a^2\left(b-1\right)^2-\frac{1}{4}\left( c^2_1 +c^2_2 \right)^2+\left(b-1\right)^2 \right] \\
\hspace{0.4cm}\times \left[ \left( b+1 \right)^2 -\frac{\frac{1}{4}\left( c^2_1 +c^2_2 \right)^2}{a^2+1} \right]\geq 0, \end{array} \\
\end{array}
\end{equation}
\end{proof}
using $ \frac{1}{4}\left( c_1^2 + c_2^2
 \right)^2 -c_1^2c_2^2=\frac{1}{4}\left( c_1^2 - c_2^2 \right)^2 \geq 0$ and the first condition in Eq.(\ref{arithmetic1}).

To prove the satisfaction of other inequalities in Eq. (\ref{inf-uncer}) for $N > 2$, we use the technique of mathematical induction. Let us assume that the following inequality for $N$ is satisfied,
\begin{equation}
\label{nn}
\begin{array}{ll}
B^N\left[\left(A+1\right)^N+A^N\right]\geq 2\left[ C^N\pm\left(-1\right)^ND^N \right]^2, 
\end{array}
\end{equation}
which can be shown to lead to the satisfaction of
\begin{equation}
\label{nnn}
\begin{array}{ll}
\Delta^2_{inf} \left( \hat X^{\left(N\right)}_B \right)\Delta^2_{inf} \left( \hat P^{\left(N\right)}_B \right) \geq \frac{1}{4}\left| \left< \left[ \hat X^{\left(N\right)}_B, \hat P^{\left(N\right)}_B \right] \right> \right|^2_{inf} .
\end{array}
\end{equation} 
In Eq.(\ref{nnn}), we use the expressions
\begin{widetext}
\begin{equation}
\label{nn-n}
\begin{array}{ccc}
2\Delta^2_{inf} \left( \hat X^{\left(N\right)}_B \right) = N!\left(B^N+\left(B+1\right)^N\right)-\frac{4N!\left(C^N+(-1)^ND^N\right)^2}{N!\left(A^N+\left(A+1\right)^N\right)}\\
2\Delta^2_{inf} \left( \hat P^{\left(N\right)}_B \right) = N!\left(B^N+\left(B+1\right)^N\right)-\frac{4N!\left(C^N+(-1)^{\left(N+1\right)}D^N\right)^2}{N!\left(A^N+\left(A+1\right)^N\right)} \\
\left| \left< \left[ \hat X^{\left(N\right)}_B, \hat P^{\left(N\right)}_B \right] \right> \right|^2_{inf} =\left[ N! \left( \left(B+1\right)^N-B^N\right) \right]^2
\end{array}
\end{equation}
\end{widetext}
Due to the condition $ c_1 \geq \left|c_2 \right| \geq 0$, we have $\left[ C^{N+1}+\left(-1\right)^{N+1}D^{N+1} \right]^2 \geq \left[ C^{N+1}+\left(-1\right)^{N+2}D^{N+1} \right]^2$. 
We then need to show that the inequality (\ref{nn}) is also safisfied for $N+1$ with a positive sign only, i.e. $B^{N+1}\left[\left(A+1\right)^{N+1}+A^{N+1}\right]\geq 2\left[ C^{N+1}+\left(-1\right)^{N+1}D^{N+1} \right]^2 $. \\
\begin{proof} 
\begin{equation}
\label{nn+1}
\begin{array}{llllll}
B^{N+1}\left[\left(A+1\right)^{N+1}+A^{N+1}\right]- 2\left[ C^{N+1}+\left(-1\right)^{N+1}D^{N+1} \right]^2 \\
{\begin{array}{ll} = BA B^N\left[\left(A+1\right)^N+A^N\right] +B^{N+1}\left(A+1\right)^{N}\\
\;\;-2\left[ C^{N+1}+\left(-1\right)^{N+1}D^{N+1} \right]^2 \\ \end{array} } \\
\begin{array}{ll} \geq \left[\left( C^2 +D^2 \right)-\frac{B}{2}\right]B^N\left[\left(A+1\right)^N+A^N\right] \\
\;\; +B^{N+1}\left(A+1\right)^{N}-2\left[ C^{N+1}+\left(-1\right)^{N+1}D^{N+1} \right]^2 \\ \end{array} \\
\begin{array}{ll} \geq 2\left(C^2 +D^2\right)\left[ C^N+\left(-1\right)^ND^N \right]^2 + \frac{B^{N+1}}{2}\left[ \left(A+1\right)^N-A^N \right] \\
\;\;-2\left[ C^{N+1}+\left(-1\right)^{N+1}D^{N+1} \right]^2 \\ \end{array} \\
\geq 4\left[ \left(-1\right)^NC^ND^N\left(C^2+D^2\right)\right]+\frac{B^{N+1}}{2}\left[ \left(A+1\right)^N-A^N \right]. \\
\geq 0
\end{array}
\end{equation}
\end{proof}
The first inequality is due to the non-steering condition in Eq. (\ref{arithmetic1}), the second from the assumption in Eq.(\ref{nn}), the third due to the relation of  arithmetical-geometric means after expanding terms, and the fourth from the assumption $ c_1 \geq \left|c_2\right| \geq0$ implying $ \left(-CD \right)^N \geq 0$.

\end{document}